\documentclass[12pt,reqno]{amsart}
\usepackage{graphicx,amscd, color,amsmath,amsfonts,amssymb,geometry,amssymb, xfrac,nicefrac}
\usepackage[initials]{amsrefs}

\newtheorem{theorem}{Theorem}[section]

\newtheorem{proposition}[theorem]{Proposition}
\newtheorem{corollary}[theorem]{Corollary}

\newtheorem{remark}[theorem]{Remark}

\newtheorem{lemma}[theorem]{Lemma}

\newtheorem{definition}[theorem]{Definition}

\numberwithin{equation}{section}

\newcommand{\Ima}{\operatorname{Im}}
\newcommand{\M}{\mathcal}

\usepackage{braket}

\geometry{left=2cm,right=2cm,top=2cm,bottom=2cm,headheight=2.5mm}
\hyphenation{par-ti-cu-lar-ly}

\begin{document}

\title{A reduction of the separability problem to SPC states in the filter normal form}

\author[Cariello ]{D. Cariello}

\address{Faculdade de Matem\'atica, \newline\indent Universidade Federal de Uberl\^{a}ndia, \newline\indent 38.400-902 Ð Uberl\^{a}ndia, Brazil.}\email{dcariello@ufu.br}

\keywords{} 

\subjclass[2010]{}

\begin{abstract}It was recently suggested  that a solution to the separability problem for states that remain positive under partial transpose composed with realignment (the so-called symmetric with positive coefficients states or simply SPC states) could  shed light on entanglement in general. Here we show that such a solution would solve the problem completely.
Given a state in $ \M{M}_k\otimes\M{M}_m$, we build a SPC state in  $ \M{M}_{k+m}\otimes\M{M}_{k+m}$ with the same Schmidt number.   It is known  that this type of state can be put in the filter normal form retaining its type. A solution to the separability problem in $\M{M}_k\otimes\M{M}_m$ could be obtained by solving the same problem for SPC states in the filter normal form within $\M{M}_{k+m}\otimes\M{M}_{k+m}$.
 This SPC state can be built arbitrarily close to the orthogonal projection on the symmetric subspace of  $ \mathbb{C}^{k+m}\otimes\mathbb{C}^{k+m}$. All the information required to understand entanglement in $ \M{M}_s\otimes\M{M}_t$ $(s+t\leq k+m)$ lies inside an arbitrarily small ball around that projection.
 We also show that the Schmidt number of any state $\gamma\in\mathcal{M}_n\otimes\mathcal{M}_n$ which commutes with the flip operator and lies inside a small ball around that projection cannot exceed $\lfloor\frac{n}{2}\rfloor$.

\end{abstract}

\vspace{0,5cm}

\maketitle

\section{Introduction}

 A discussion  on  a series of coincidences regarding a triad of quantum states was presented through the references \cites{cariello, CarielloIEEE, CarielloLMP2, CarielloArxiv}. Let the partial transpose of $\alpha\in\M{M}_k\otimes\M{M}_k$ be denoted here by $\alpha^{\Gamma}$ \cite{horodeckifamily,peres} and the realignment map be denoted by  $\mathcal{R}(\alpha)$ \cite{rudolph,rudolph2}.  The triad mentioned in these articles is formed by 
\begin{itemize}
\item the states that remain positive under partial transpose $(\alpha^{\Gamma}\geq 0)$, the so-called positive under partial transpose states or simply PPT states,
\item the states  that remain positive under partial transpose composed with realignment $(\mathcal{R}(\alpha^{\Gamma})\geq 0)$, the so-called symmetric with positive coefficients states or simply SPC states and
\item  the states that remain the same under realignment $(\alpha=\mathcal{R}(\alpha))$, the so-called invariant under realignment states.
\end{itemize}

\vspace{0,2cm}

These coincidences ultimately led to the claim that there is  a triality pattern  in entanglement theory  \cite{CarielloArxiv}, where every  proven result for one type of such states has counterparts for the other two. It was also claimed that a solution to the separability problem \cite{horodeckifamily,peres} for SPC states or invariant under realignment states could provide insights for adapting to the most important type: the positive under partial transpose states.

\vspace{0,5cm}

In this  note, we prove that SPC states are indeed extremely important to entanglement theory, since we can embed the entire set of states into SPC states and, therefore, reduce the separability problem to them. 
\vspace{0,5cm}

We begin with the notion of the Schmidt number of a bipartite mixed state $\gamma\in\mathcal{M}_k\otimes\M{M}_m$  \cite{Sperling2011, Terhal}. Consider all the ways to express the state $\gamma$ as $\sum_{i=1}^nv_iv_i^*$, where $v_i\in\mathbb{C}^k\otimes\mathbb{C}^m$, and define its Schmidt number by the following minimum over all these expressions  $$SN(\gamma)=\min\left\{\max_{i\in\{1,\ldots,n\}}\left\{SR(v_i)\right\}, \gamma=\sum_{i=1}^nv_iv_i^*\right\},$$
where $SR(v_i)$ stands for the Schmidt rank of $v_i\in\mathbb{C}^k\otimes\mathbb{C}^m$. Recall that $\gamma$ is separable if and only if $SN(\gamma)=1$.

\vspace{0,5cm}

Then we  recall another result proved in item $b)$ of \cite[Theorem 2]{CarielloLMP2}. It says that if $\beta\in \M{M}_{k}\otimes\M{M}_{k}$ is a state supported on the the anti-symmetric subspace  of $\mathbb{C}^{k}\otimes \mathbb{C}^{k}$ $($denoted here by $\mathbb{C}^{k}\wedge \mathbb{C}^{k})$ then $$SN\left(P_{sym}^{k,2}+\epsilon \frac{\beta}{tr(\beta)}\right)=\frac{1}{2}SN(\beta),$$
for every $\epsilon\in\ \left]0,\frac{1}{6}\right]$, where  $P_{sym}^{k,2}$ stands for the orthogonal projection on the symmetric subspace of $\mathbb{C}^{k}\otimes \mathbb{C}^{k}$ and $tr(\beta)$ is the trace of $\beta$.

\vspace{0,5cm}

The idea is quite simple now, given any state $\gamma\in\M{M}_k\otimes\M{M}_m$, we must create a state $\widetilde{\gamma}\in\M{M}_{k+m}\otimes\M{M}_{k+m}$ such that
\vspace{0,5cm}

 \begin{center}
$\operatorname{Im}(\widetilde{\gamma})\subset \mathbb{C}^{k+m}\wedge \mathbb{C}^{k+m}$ and $SN(\widetilde{\gamma})=2SN(\gamma)$.
\end{center}

\vspace{0,5cm}

Therefore,  for every $\epsilon\in\ \left]0,\frac{1}{6}\right]$, $$SN\left(P_{sym}^{k+m,2}+\epsilon\frac{\widetilde{\gamma}}{tr(\widetilde{\gamma})} \right)=\frac{1}{2}SN(\widetilde{\gamma})=SN(\gamma).$$

\vspace{0,5cm}
Now,
$\gamma$ is separable $(SN(\gamma)=1)$ if and only if \begin{center}$P_{sym}^{k+m,2}+\epsilon\dfrac{\widetilde{\gamma}}{tr(\widetilde{\gamma})}$ is separable $\left(SN\left(P_{sym}^{k+m,2}+\epsilon\frac{\widetilde{\gamma}}{tr(\widetilde{\gamma})} \right)=1\right)$.
\end{center} 

\vspace{0,5cm}
Next, consider the partial traces of $\gamma=\sum_{i=1}^n A_i\otimes B_i \in\mathcal{M}_k\otimes\mathcal{M}_m$ as 

\vspace{0,3cm}

\begin{center}
$tr_A(\gamma)=\sum_{i=1}^nB_itr(A_i)$\ \ \ and\ \ \ $tr_B(\gamma)=\sum_{i=1}^nA_itr(B_i)$.
\end{center}

\vspace{0,3cm}

We say that $\gamma \in\mathcal{M}_k\otimes\mathcal{M}_m$ is in the filter normal form if $tr_A(\gamma)=\frac{Id}{m}$ and $tr_B(\gamma)=\frac{Id}{k}$. In addition, $\gamma \in\mathcal{M}_k\otimes\mathcal{M}_m$ can be put in the filter normal form if there are invertible matrices $V\in\mathcal{M}_k$ and $W\in\mathcal{M}_m$ such that $(V\otimes W)\gamma (V^*\otimes W^*)$ is in the filter normal form.

\vspace{0,5cm}

This filter normal form has been used to  improve some separability criteria \cite{guhnesurvey,Git}, but it is an open problem to determine which states can be put in the filter normal form or not. Although there are partial results, such as positive definite states which can always be put in the filter normal form \cites{gurvits2004, Aubrun}. Actually, even states with small nullity can be put in this normal form \cite[Theorem 4.3]{CarielloLAMA}.

\vspace{0,5cm}
Recently, it was shown that SPC states and invariant under realignment states can also be put in the filter normal form preserving their specific structures \cite[Corollary 4.6]{CarielloArxiv}.

\vspace{0,5cm}

Here, we show that the partial transpose of  $\delta=P_{sym}^{k+m,2}+\epsilon\frac{\widetilde{\gamma}}{tr(\widetilde{\gamma})}$ is positive definite, i.e. $\delta^{\Gamma}>0$, therefore $\delta$ can be put in the filter normal form (See theorem \ref{theorem1}). Actually, in the same theorem, we obtain a stronger result, we prove that the partial transpose composed with realignment of this state is positive definite, i.e. $\mathcal{R}(\delta^{\Gamma})>0$. Hence $\delta$ is a SPC state for any state $\gamma\in\M{M}_k\otimes\M{M}_m$.

\vspace{0,5cm}

Notice that we embed the entire  set of states of $\M{M}_{k}\otimes\M{M}_{m}$ into the SPC states within a ball of radius arbitrarily small around $P_{sym}^{k+m,2}\in\M{M}_{k+m}\otimes\M{M}_{k+m}$.

\vspace{0,5cm}

As  mentioned three paragraphs above, it was proved in  \cite[Corollary 4.6]{CarielloArxiv} that there is an invertible matrix $U\in\M{M}_{k+m}$ such that $(U\otimes U)\delta(U^*\otimes U^*)$ is in the filter normal form. Now, $\mathcal{R}((U\otimes U)\delta(U^*\otimes U^*)^{\Gamma})$ remains positive definite, as explained in our corollary \ref{corollary1}. Hence $(U\otimes U)\delta(U^*\otimes U^*)$ is a SPC state in the filter normal form. Thus, we reduce the separability problem to the SPC case in the filter normal form. 

\vspace{0,5cm}

There is an existential argument that shows that the separability problem can be theoretically reduced to states in the filter normal form, however it is unsatisfactory for a couple of reasons. Here is the argument: 

\vspace{0,3cm}

\begin{quote}Given a state $\gamma\in\mathcal{M}_k\otimes\mathcal{M}_m$ and $\epsilon\in\ ]0,1[$, there are two cases: or there is a small positive $\epsilon$ such that $\delta_{\epsilon}=\epsilon (Id/k)\otimes (Id/m)+(1-\epsilon)\gamma$ is entangled and, therefore  $\gamma$ is entangled or $\delta_{\epsilon}$ is separable for every $\epsilon$ and, therefore $\gamma$ is separable. In addition $\delta_{\epsilon}$ can always be put in the filter normal form, since it is positive definite \cite[Remark 5]{Aubrun} or \cite[Theorem 4.3]{CarielloLAMA}.
\end{quote}

\vspace{0,3cm}

There are two main issues with this argument: We do not know how small $\epsilon$ must be to obtain an entangled state $\delta_{\epsilon}$ in the first case. Consequently, in order to detect the separability of $\gamma$ in the second case, we must show the separability of $\delta_{\epsilon}$ for infinitely many values of $\epsilon$. The uncertainty in the first case and the infinitely many states to be considered in the second case are very problematic. \\

In contrast, the result we present here is explicit: given a state $\gamma$, in order to prove its separability or disprove it, it is enough to do the same for $\delta=P_{sym}^{k+m,2}+\epsilon\frac{\widetilde{\gamma}}{tr(\widetilde{\gamma})}$, where $\delta$ can be put in the filter normal form. 

\vspace{0,3cm}

Moreover, our argument provides a straightforward way to produce states $\delta$ around $P^{n,2}_{sym}$ with Schmidt number varying from $1$ to $\lfloor\frac{n}{2}\rfloor$, where  $n=k+m$, $k=\lfloor\frac{n}{2}\rfloor$  and  $m=\lceil\frac{n}{2}\rceil$.
The reader may wonder if $\lfloor\frac{n}{2}\rfloor$ is an optimal upper bound for the Schmidt number of states within some small ball around $P^{n,2}_{sym}$. In our final section we prove that this is indeed the case for states that commute with the flip operator $F\in \mathcal{M}_n\otimes\mathcal{M}_n$.
\vspace{0,2cm}

This note is organized as follows: In section 2, we construct  $\widetilde{\gamma}\in\M{M}_{k+m}\otimes\M{M}_{k+m}$ described above and in section 3, we reduce the separability problem to the SPC case. In section 4, we show that the Schmidt number of any state $\gamma\in\mathcal{M}_n\otimes\mathcal{M}_n$ which commutes with the flip operator and lies inside a small ball around $P_{sym}^{n,2}$ cannot exceed $\lfloor\frac{n}{2}\rfloor$.

\vspace{0,5cm}

\section{Preliminaries}

\vspace{0,5cm}

In this section we construct  $\widetilde{\gamma}\in\M{M}_{k+m}\otimes\M{M}_{k+m}$ supported on $\mathbb{C}^{k+m}\wedge\mathbb{C}^{k+m}$ such that $SN(\widetilde{\gamma})=2SN(\gamma)$ (See lemma \ref{lemma1}), given a state $\gamma\in\M{M}_k\otimes\M{M}_m$, but first let us fix some notation.

\vspace{0,5cm}

Let $P_{anti}^{k,2}\in\M{M}_k\otimes\M{M}_k$, $P_{sym}^{k,2}\in\M{M}_k\otimes\M{M}_k$ be the orthogonal projections onto the anti-symmetric and symmetric subspaces of $\mathbb{C}^k\otimes\mathbb{C}^k$. In addition, let $V,W$ be subspaces of $\mathbb{C}^k$ and consider $V\wedge W$ as the subspace of  $\mathbb{C}^k\otimes\mathbb{C}^k$ generated by all $v\wedge w=v\otimes w - w\otimes v$, where $v\in V$ and $w\in W$.

\vspace{0,5cm}

\begin{definition}\label{definition1}Let $C=\begin{pmatrix}
Id_{k\times k}\\
0_{m\times k}
\end{pmatrix}\otimes \begin{pmatrix}
0_{k\times m}\\
Id_{m\times m}
\end{pmatrix}$ and $Q=P_{anti}^{k+m,2}C$.\end{definition}

\vspace{0,5cm}

\begin{remark}\label{remark1} Notice that $SR(Cv)=SR(v)$ and $SR(Qv)=2SR(v)$, for every $v\in\mathbb{C}^{k}\otimes\mathbb{C}^{m}$.\end{remark}

\vspace{0,5cm}

The next lemma fill in the details to construct the aforementioned $\widetilde{\gamma}\in\M{M}_{k+m}\otimes\M{M}_{k+m}$. 

\vspace{0,5cm}

\begin{lemma}\label{lemma1}Let $C,Q$ be as in definition \ref{definition1}. Then
\begin{enumerate}
\item $C^*Q=\frac{1}{2}Id\in \M{M}_{k}\otimes \M{M}_{m}$
\item $SR(C^*v)=\frac{1}{2}SR(v)$, for every $v\in (\mathbb{C}^k\times\vec{0}_{m})\wedge (\vec{0}_{k}\times\mathbb{C}^m)$.
\item $SN(Q\gamma Q^*)=2SN(\gamma)$ and $\operatorname{Im}(Q\gamma Q^*)\subset \mathbb{C}^{k+m}\wedge \mathbb{C}^{k+m}$, for every state $\gamma\in\M{M}_{k}\otimes\M{M}_{m}.$
\end{enumerate}
\end{lemma}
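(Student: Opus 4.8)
The three items should be handled in the order $(1) \to (2) \to (3)$, since the first is a direct computation that feeds the other two. For item $(1)$, I would write $C = A_1 \otimes A_2$ where $A_1 = \binom{Id_{k}}{0}$ and $A_2 = \binom{0}{Id_{m}}$ are the two isometric embeddings $\mathbb{C}^k \hookrightarrow \mathbb{C}^{k+m}$ and $\mathbb{C}^m \hookrightarrow \mathbb{C}^{k+m}$ onto complementary coordinate blocks. Then $C^*C = (A_1^*A_1)\otimes(A_2^*A_2) = Id_k \otimes Id_m$. Since $Q = P_{anti}^{k+m,2} C$ and $P_{anti}$ is an orthogonal projection, $C^*Q = C^* P_{anti}^{k+m,2} C$. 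The key point is that for $v \in \mathbb{C}^k$, $w \in \mathbb{C}^m$ (viewed inside $\mathbb{C}^{k+m}$ via $A_1, A_2$), the vectors $A_1 v \otimes A_2 w$ and $A_2 w \otimes A_1 v = \text{(flip of }A_1 v \otimes A_2 w)$ are orthogonal, because $A_1 v$ and $A_2 w$ live in complementary blocks. Hence $P_{anti}^{k+m,2}(A_1 v \otimes A_2 w) = \tfrac12(A_1v\otimes A_2 w - A_2 w \otimes A_1 v)$, and applying $C^* = A_1^* \otimes A_2^*$ kills the second term (since $A_1^* A_2 w = 0$ and $A_2^* A_1 v = 0$) and returns $\tfrac12 v \otimes w$ from the first. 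By linearity this gives $C^*Q = \tfrac12 Id$ on $\mathbb{C}^k \otimes \mathbb{C}^m$.

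For item $(2)$, note that the image of $Q$ is exactly $(\mathbb{C}^k \times \vec 0_m) \wedge (\vec 0_k \times \mathbb{C}^m)$: a vector there has the form $v = Q x$ for some $x \in \mathbb{C}^k\otimes\mathbb{C}^m$, and by Remark~\ref{remark1} $SR(Qx) = 2 SR(x)$, while item $(1)$ gives $C^* v = C^* Q x = \tfrac12 x$, so $SR(C^* v) = SR(x) = \tfrac12 SR(Qx) = \tfrac12 SR(v)$. One should just check that every element of $(\mathbb{C}^k \times \vec 0_m)\wedge(\vec 0_k\times \mathbb{C}^m)$ is indeed of the form $Qx$; this follows because such an element is a linear combination of terms $A_1 v \wedge A_2 w = A_1 v \otimes A_2 w - A_2 w \otimes A_1 v = 2 P_{anti}^{k+m,2}(A_1 v\otimes A_2 w) = 2 Q(v\otimes w)$.

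For item $(3)$, set $\widetilde\gamma = Q\gamma Q^*$. The inclusion $\Ima(Q\gamma Q^*) \subset \mathbb{C}^{k+m}\wedge\mathbb{C}^{k+m}$ is immediate since $\Ima(Q\gamma Q^*) \subset \Ima(Q) \subset \Ima(P_{anti}^{k+m,2}) = \mathbb{C}^{k+m}\wedge\mathbb{C}^{k+m}$. For the Schmidt number identity I would prove both inequalities. For $SN(Q\gamma Q^*) \le 2 SN(\gamma)$: take an optimal decomposition $\gamma = \sum_i v_i v_i^*$ with $\max_i SR(v_i) = SN(\gamma)$; then $Q\gamma Q^* = \sum_i (Qv_i)(Qv_i)^*$ and $SR(Qv_i) = 2 SR(v_i) \le 2SN(\gamma)$ by Remark~\ref{remark1}. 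For the reverse $SN(\gamma) \le \tfrac12 SN(Q\gamma Q^*)$: take an optimal decomposition $Q\gamma Q^* = \sum_i w_i w_i^*$; each $w_i \in \Ima(Q\gamma Q^*) \subset (\mathbb{C}^k\times\vec 0_m)\wedge(\vec 0_k\times\mathbb{C}^m)$, so by item $(2)$ $SR(C^* w_i) = \tfrac12 SR(w_i) \le \tfrac12 SN(Q\gamma Q^*)$; and by item $(1)$, $C^*(Q\gamma Q^*)C = (C^*Q)\gamma(C^*Q)^* = \tfrac14 \gamma$, so $\gamma = 4\sum_i (C^*w_i)(C^*w_i)^*$ is a valid decomposition witnessing $SN(\gamma) \le \tfrac12 SN(Q\gamma Q^*)$. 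The main obstacle — really the only delicate point — is the orthogonality observation in item $(1)$ that makes the cross terms vanish under $C^*$; everything else is bookkeeping with Schmidt ranks and Remark~\ref{remark1}. One should also confirm $w_i$ genuinely lies in the wedge subspace rather than merely in $\Ima(Q)$, but since $Q\gamma Q^*$ is supported on $\Ima(Q)$ and that image equals the stated wedge subspace, this is automatic.
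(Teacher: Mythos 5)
Your proof is correct and follows essentially the same route as the paper: the key step in item $(1)$ is the vanishing of the cross term under $C^*$ (the paper phrases this as $C^*FC=0$, you as $A_1^*A_2w=0$ and $A_2^*A_1v=0$, which is the same observation), and item $(3)$ is obtained from the two inequalities combined with $C^*Q\gamma Q^*C=\tfrac14\gamma$ exactly as in the paper. The only divergence is item $(2)$: the paper writes $v=\sum_{i=1}^n(a_i\times\vec{0}_m)\wedge(\vec{0}_k\times b_i)$ with the $a_i$ and the $b_i$ linearly independent and reads off $SR(v)=2n$ and $SR(C^*v)=n$ directly, whereas you deduce the identity from the surjectivity of $Q$ onto the wedge subspace together with Remark~\ref{remark1} and item $(1)$ --- a harmless variation that simply shifts the combinatorial content onto the (unproved but stated) remark.
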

\begin{proof}
$1)$ Let $F\in \M{M}_{k+m}\otimes \M{M}_{k+m}$ be the flip operator and recall that $$P_{anti}^{k+m,2}=\frac{1}{2}(Id-F)\in \M{M}_{k+m}\otimes \M{M}_{k+m}.$$

Now, let $a\otimes b\in\mathbb{C}^k\otimes \mathbb{C}^m$ and consider $$C^*FC(a\otimes b)=C^*F \begin{pmatrix}
a \\
0_m
\end{pmatrix}\otimes \begin{pmatrix}
0_k \\
b
\end{pmatrix} = C^*\begin{pmatrix}
0_k \\
b
\end{pmatrix}\otimes \begin{pmatrix}
a \\
0_m
\end{pmatrix}=0_k\otimes 0_m.$$

\vspace{0.2cm}

Hence $C^*FC=0$ and $$C^*P_{anti}^{k+m,2}C=\frac{1}{2}(C^*C+C^*FC)=\frac{1}{2}C^*C=\frac{1}{2}Id\ \ \in \M{M}_{k}\otimes \M{M}_{m}.$$
\noindent $2)$ If $v\in (\mathbb{C}^k\times\vec{0}_{m})\wedge (\vec{0}_{k}\times\mathbb{C}^m)$ then there are linearly independent vectors $a_1,\ldots,a_n$ of $\mathbb{C}^k$ and linearly independent vectors $b_1,\ldots,b_n$ of $\mathbb{C}^m$ such that $$v=\sum_{i=1}^n\begin{pmatrix}
a_i \\
0_m
\end{pmatrix}\wedge \begin{pmatrix}
0_k \\
b_i
\end{pmatrix}.$$
Hence $\begin{pmatrix}
a_1 \\
0_m
\end{pmatrix},\ldots,\begin{pmatrix}
a_n \\
0_m
\end{pmatrix}, \begin{pmatrix}
0_k \\
b_1
\end{pmatrix},\ldots, \begin{pmatrix}
0_k \\
b_n
\end{pmatrix}$ are linearly independent  and $SR(v)=2n$.

\vspace{0,2cm}

Finally, notice that $C^*v=\sum_{i=1}^n a_i\otimes b_i$ andt is Schmidt rank is $n=\frac{1}{2}SR(v)$.\\

\noindent $3)$ First, by remark \ref{remark1}, $SN(Q\gamma Q^*)\leq 2SN(\gamma)$.\vspace{0,2cm}

Next, by item $1)$, $C^*Q\gamma Q^*C=\frac{1}{4}\gamma$ and, by item $2)$, $SN(C^*Q\gamma Q^*C)\leq \frac{1}{2}SN(Q\gamma Q^*)$.
\vspace{0,2cm}

These three pieces of information together imply
$$SN(\gamma)=SN(C^*Q\gamma Q^*C)\leq \frac{1}{2}SN(Q\gamma Q^*)\leq SN(\gamma).$$

Finally,  since $\operatorname{Im}(Q)\subset \mathbb{C}^{k+m}\wedge \mathbb{C}^{k+m}$, we get $$\operatorname{Im}(Q\gamma Q^*)\subset \mathbb{C}^{k+m}\wedge \mathbb{C}^{k+m}.$$
\end{proof}

\section{The Embedding and The Reduction}

The next theorem is the key to reduce the separability problem to SPC states in the filter normal form (See corollary \ref{corollary1}).

\vspace{0,5cm}

\begin{theorem}\label{theorem1}Given $\epsilon\in\ \left]0,\frac{1}{6}\right]$ and $Q$ as in definition \ref{definition1},  consider the positive map 
$T:\M{M}_k\otimes\M{M}_m\rightarrow\M{M}_{k+m}\otimes \M{M}_{k+m}$ defined by \begin{center}
$T(\gamma)=\frac{tr(\gamma)}{2}P_{sym}^{k+m,2}+\epsilon Q\gamma Q^*.$ 
\end{center}

This linear map   possesses the following properties:
 \vspace{0,2cm}

\begin{enumerate}
\item   $SN(T(\gamma))=SN(\gamma)$ for every state $\gamma\in \M{M}_k\otimes\M{M}_m$. 
\item  $T(\gamma)^{\Gamma}$ and $\mathcal{R}(T(\gamma)^{\Gamma})$ are positive definite. Hence $T(\gamma)$ is a PPT/SPC state.
\end{enumerate}
\end{theorem}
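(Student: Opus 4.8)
The plan is to prove the two items separately, reusing \lemref{lemma1} for item (1) and doing a direct computation with the flip operator for item (2).

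For item (1), observe that $T(\gamma) = tr(Q\gamma Q^*)\bigl(P_{sym}^{k+m,2} + \epsilon \frac{Q\gamma Q^*}{tr(Q\gamma Q^*)}\bigr)$, so up to the positive scalar $tr(Q\gamma Q^*)$ (which does not affect the Schmidt number) it is exactly the state appearing in the recalled result from \cite[Theorem 2]{CarielloLMP2}, with $\beta = Q\gamma Q^*$. By item (3) of \lemref{lemma1}, $\beta$ is supported on the anti-symmetric subspace $\mathbb{C}^{k+m}\wedge\mathbb{C}^{k+m}$, so that result applies and gives $SN(T(\gamma)) = \frac{1}{2}SN(Q\gamma Q^*)$. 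Again by item (3) of \lemref{lemma1}, $SN(Q\gamma Q^*) = 2SN(\gamma)$, so $SN(T(\gamma)) = SN(\gamma)$, and in particular separability is preserved. One should also note $Q\gamma Q^* \neq 0$ whenever $\gamma \neq 0$ — this follows from item (1) of \lemref{lemma1} since $C^*Q\gamma Q^*C = \frac{1}{4}\gamma$ — so the scalar $tr(Q\gamma Q^*)$ is strictly positive and the normalization is legitimate.

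For item (2), the main computation is to identify $T(\gamma)^\Gamma$ and $\mathcal{R}(T(\gamma)^\Gamma)$ explicitly. The key structural facts are: $(P_{sym}^{k+m,2})^\Gamma = \frac{1}{2}(Id + F)^\Gamma$, and the partial transpose of the flip operator $F$ on $\mathbb{C}^{k+m}\otimes\mathbb{C}^{k+m}$ is $|\psi\rangle\langle\psi|$ (up to normalization), where $|\psi\rangle = \sum_i e_i\otimes e_i$ is the unnormalized maximally entangled vector; hence $F^\Gamma$ is positive semidefinite. Meanwhile $(Q\gamma Q^*)^\Gamma$, being the partial transpose of a positive operator, need not be positive, but because it lies inside a scalar multiple of the symmetric-subspace projection after adding $P_{sym}^{k+m,2}$, and because $\epsilon$ is small, the positive-definite part coming from $\frac{1}{2}(Id + F)^\Gamma = \frac{1}{2}Id + \frac{1}{2}F^\Gamma$ dominates. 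More precisely, I expect to show $T(\gamma)^\Gamma \geq \frac{tr(Q\gamma Q^*)}{2}Id - \epsilon\|Q\gamma Q^*\| \cdot Id$ via operator-norm bounds, and the condition $\epsilon \leq \frac{1}{6}$ (together with $\|Q\gamma Q^*\| \leq tr(Q\gamma Q^*)$ and the factor from $\|Q\|$, or a sharper bound using that $Q\gamma Q^*$ sits on the antisymmetric subspace where $F$ acts as $-Id$) forces strict positivity. The same circle of ideas applies to $\mathcal{R}(T(\gamma)^\Gamma)$: one computes $\mathcal{R}((P_{sym}^{k+m,2})^\Gamma)$ — which should again be a positive operator with a spectral gap bounded below independently of $\gamma$ — and then absorbs the $\epsilon$-term by a norm estimate, using that realignment and partial transpose are both linear and that $\mathcal{R}(\cdot)^\Gamma$ composed appropriately does not increase the relevant norm too much.

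The main obstacle will be controlling $\mathcal{R}(T(\gamma)^\Gamma)$: whereas $T(\gamma)^\Gamma > 0$ only needs the standard fact that $F^\Gamma \geq 0$ plus a crude norm bound on the perturbation, the realignment map is not positive and does not interact as cleanly with the operator norm, so I will need an explicit description of $\mathcal{R}\bigl((P_{sym}^{k+m,2})^\Gamma\bigr)$ and a lower bound on its smallest eigenvalue, then a matching upper bound on $\|\mathcal{R}((Q\gamma Q^*)^\Gamma)\|$ in terms of $tr(Q\gamma Q^*)$. I expect $\mathcal{R}\bigl((P_{sym}^{k+m,2})^\Gamma\bigr)$ to turn out to be $\frac{1}{2}(Id + \mathcal{R}(F^\Gamma))$ or something similarly explicit — $F^\Gamma$ being proportional to a rank-one projection, its realignment is again easy to write down — and the eigenvalue gap there, combined with $\epsilon \leq \frac{1}{6}$, should close the argument. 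Once both $T(\gamma)^\Gamma$ and $\mathcal{R}(T(\gamma)^\Gamma)$ are shown positive definite, $T(\gamma)$ is by definition a PPT state and a SPC state, completing the proof.
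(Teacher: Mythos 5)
Your item (1) is exactly the paper's argument: normalize by $tr(Q\gamma Q^*)>0$, apply the recalled Schmidt-number result with $\beta=Q\gamma Q^*$, and use item (3) of \lemref{lemma1}; nothing to add there. The first half of item (2) is also essentially the paper's route --- $(P_{sym}^{k+m,2})^\Gamma=\frac{1}{2}(Id+uu^t)$ with $u=\sum_i e_i\otimes e_i$, smallest eigenvalue $\frac{1}{2}$, plus a norm bound on the $\epsilon$-perturbation --- but note one inaccuracy: the quantity you must control is $\|(Q\gamma Q^*)^{\Gamma}\|_{\infty}$, not $\|Q\gamma Q^*\|_{\infty}$, and the partial transpose can increase the operator norm. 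The paper gets $\|(Q\gamma Q^*)^{\Gamma}\|_{\infty}\leq\|(Q\gamma Q^*)_A\|_{\infty}\leq tr(Q\gamma Q^*)$ from a cited lemma; since $\|X^{\Gamma}\|_{\infty}\leq tr(X)$ does hold for every $X\geq 0$, this part of your plan is fixable and I would not call it fatal.

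The genuine gap is in the realignment half. You propose to treat $\mathcal{R}(T(\gamma)^{\Gamma})$ as ``positive-definite base plus small perturbation'' and to close the argument with an upper bound on $\|\mathcal{R}((Q\gamma Q^*)^{\Gamma})\|$. But $\mathcal{R}$ does not preserve Hermiticity, so a norm estimate alone cannot prove that $\mathcal{R}(T(\gamma)^{\Gamma})$ is positive definite --- the claim is not even well-posed until you show the matrix is Hermitian. The missing ingredient, which is the actual content of the paper's proof, is the identity $\mathcal{R}(\delta^{\Gamma})=(\delta F)^{\Gamma}=-\delta^{\Gamma}$, valid whenever $\delta F=-\delta$, i.e.\ whenever $\delta$ is supported on the antisymmetric subspace (item (7) of Lemma 2.3 of the triality preprint). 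Since $\Ima(Q\gamma Q^*)\subset\mathbb{C}^{k+m}\wedge\mathbb{C}^{k+m}$ by \lemref{lemma1}(3), this gives $\mathcal{R}((Q\gamma Q^*)^{\Gamma})=-(Q\gamma Q^*)^{\Gamma}$, which is Hermitian and has exactly the operator norm you already bounded in the first half; combined with $\mathcal{R}(Id+uu^t)=uu^t+Id$, the same $\epsilon\leq\frac{1}{6}<\frac{1}{2}$ estimate finishes the proof. (Your guessed formula $\frac{1}{2}(Id+\mathcal{R}(F^{\Gamma}))$ for $\mathcal{R}((P_{sym}^{k+m,2})^{\Gamma})$ tacitly assumes $\mathcal{R}(Id)=Id$, which is false --- $\mathcal{R}(Id)=uu^t$ --- although the sum $Id+uu^t$ happens to be $\mathcal{R}$-invariant.) You do mention the antisymmetric support in passing, but only as a possible sharpening of the operator-norm bound for $(Q\gamma Q^*)^{\Gamma}$, not as the structural fact that makes the realigned perturbation Hermitian and explicitly computable; without it your outline for the second assertion of item (2) does not go through.
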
\begin{proof}
First, by lemma \ref{lemma1}, $SN(Q\gamma Q^*)=2SN(\gamma)$. \vspace{0,4cm}

Notice that $$tr(Q\gamma Q^*)=tr(\gamma Q^*Q)=\frac{tr(\gamma)}{2},$$ 

since $Q^*Q=C^*Q=\frac{Id}{2}$ by item $a)$ of lemma \ref{lemma1}. Hence, by theorem \cite[Theorem 2]{CarielloLMP2},

we have  $SN(T(\gamma))=SN\left(\frac{2}{tr(\gamma)}T(\gamma)\right)$  $$\hspace{1,5cm}=SN\left(P_{sym}^{k+m,2}+\frac{\epsilon}{\ tr(Q\gamma Q^*)} Q\gamma Q^*\right)=\frac{1}{2}SN(Q\gamma Q^*)=SN(\gamma).$$

\vspace{0,4cm}

This completes the proof of item $(1)$. Let us prove item $(2)$. \vspace{0,2cm}

Recall that $P_{sym}^{k+m,2}=\frac{1}{2}(Id+F)$, where $F\in \M{M}_{k+m}\otimes  \M{M}_{k+m}$ is the flip operator. \vspace{0,2cm}

Since $F^{\Gamma}=uu^t$, where $u=\sum_{i=1}^{k+m}e_i\otimes e_i$ and $\{e_1,\ldots,e_{k+m}\}$ is the canonical basis of $\mathbb{C}^{k+m}$, $$(P_{sym}^{k+m})^{\Gamma}=\frac{1}{2}(Id+uu^t)$$

 and its smallest eigenvalue is $\frac{1}{2}$.\vspace{0,2cm}

It is known, by \cite[Lemma 3.1]{CarielloArxiv}, that

\begin{equation}\label{eq1}
\frac{\epsilon}{\ tr(Q\gamma Q^*)} \|(Q\gamma Q^*)^{\Gamma}\|_{\infty}\leq \frac{\epsilon}{\ tr(Q\gamma Q^*)} \|tr_A(Q\gamma Q^*)\|_{\infty}=\frac{\epsilon}{\ tr(tr_A(Q\gamma Q^*))} \|tr_A(Q\gamma Q^*)\|_{\infty}\leq \epsilon.
\end{equation}

\vspace{0,5cm}

Hence $$T(\gamma)^{\Gamma}=\frac{tr(\gamma )}{2}\left(\frac{1}{2}(Id+uu^t)+\frac{\epsilon}{tr(Q\gamma Q^*)} (Q\gamma Q^*)^{\Gamma}\right)$$ 

is positive definite. Now let us prove the second assertion of item $(2)$.\vspace{0,5cm}

 Since $\mathcal{R}(Id+uu^t)=uu^t+Id$, $$\mathcal{R}\left(\left(P_{sym}^{k+m}+\frac{\epsilon\ Q\gamma Q^*}{tr(Q\gamma Q^*)}\right)^{\Gamma}\right)=\frac{1}{2}(Id+uu^t)+\mathcal{R}\left(\left(\frac{\epsilon\ Q\gamma Q^*}{tr(Q\gamma Q^*)}\right)^{\Gamma}\right).$$
 
 \vspace{0,5cm}
 
 Finally, for every state $\delta$ such that $\delta F=-\delta$, we have 
 
\begin{equation}\label{eq2}
-\delta^{\Gamma}=(\delta F)^{\Gamma}=\mathcal{R}(\delta^{\Gamma}),
\end{equation} 
by item $(7)$ of \cite[Lemma 2.3]{CarielloArxiv}. \vspace{0,5cm}

 By item $3)$ of lemma \ref{lemma1}, $\operatorname{Im}(Q\gamma Q^*)\subset \mathbb{C}^{k+m}\wedge\mathbb{C}^{k+m}$, hence
 $$\frac{\epsilon\ Q\gamma Q^*}{\ tr(Q\gamma Q^*)}F=-\frac{\epsilon\ Q\gamma Q^*}{ tr(Q\gamma Q^*)}.$$
 
  Thus, by equation \eqref{eq2}, $$\mathcal{R}\left(\left(P_{sym}^{k+m}+\frac{\epsilon\ Q\gamma Q^*}{tr(Q\gamma Q^*)}\right)^{\Gamma}\right)=\frac{1}{2}(Id+uu^t)-\left(\epsilon\  \frac{Q\gamma Q^*}{tr(Q\gamma Q^*)}\right)^{\Gamma}.$$
  
  \vspace{0,2cm}

By inequality \eqref{eq1}, this matrix is positive definite.
\end{proof}

\vspace{0,5cm}

\begin{remark} Notice that $\frac{2}{tr(\gamma)}T(\gamma)$ belongs to a small ball around $P_{sym}^{k+m,2}$ for every state $\gamma\in\M{M}_k\otimes\M{M}_m$. It is an easy task to  modify our matrices $C$ and  $Q$ in order to embed every state of $ \M{M}_s\otimes\M{M}_t$ $(s+t\leq k+m)$ inside the same ball. Hence all the information required to understand entanglement in $ \M{M}_s\otimes\M{M}_t$ $(s+t\leq k+m)$ lies inside this small ball  around $P_{sym}^{k+m,2}$.
 \end{remark}

\vspace{0,2cm}

\begin{remark}Besides its Schmidt number, the spectrum of $\gamma$  is another characteristic preserved in $T(\gamma)$ as 
$\frac{2}{\epsilon}P_{anti}^{k+m,2}T(\gamma)P_{anti}^{k+m,2}$ and $\gamma$ have the same spectrum. So all these known results that relate spectrum with separability, for example \cite{nathaniel}, can still be adapted for $T(\gamma)$.
\end{remark}

\vspace{0,5cm}
 
 The next corollary is the reduction.  
 
\vspace{0,5cm}

\begin{corollary}\label{corollary1}
The separability problem in $\M{M}_k\otimes\M{M}_m$ can be reduced to states in the filter normal form, more specifically to SPC states in the filter normal form.
\end{corollary}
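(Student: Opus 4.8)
The plan is to assemble the corollary directly from Theorem~\ref{theorem1} together with the filter‑normal‑form result quoted in the introduction, namely \cite[Corollary 4.6]{CarielloArxiv}. First I would fix a state $\gamma\in\M{M}_k\otimes\M{M}_m$ and pass to $\delta=\frac{1}{tr(Q\gamma Q^*)}T(\gamma)=P_{sym}^{k+m,2}+\frac{\epsilon}{tr(Q\gamma Q^*)}Q\gamma Q^*$, which by item $(2)$ of Theorem~\ref{theorem1} satisfies $\delta^{\Gamma}>0$; hence $\delta$ meets the hypothesis under which a state admits a filter normal form, so there is an invertible $U\in\M{M}_{k+m}$ with $\widehat{\delta}:=(U\otimes U)\delta(U^*\otimes U^*)$ in the filter normal form, i.e.\ $\widehat{\delta}_A=\widehat{\delta}_B=\tfrac{1}{k+m}Id$ up to normalization. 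The point of doing the embedding before the filtering is that $SN$ is invariant under local invertible maps, so $SN(\widehat{\delta})=SN(\delta)=SN(\gamma)$ by item $(1)$, and in particular $\gamma$ is separable iff $\widehat{\delta}$ is.

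Next I would check that the SPC property survives the filtering. The natural way is to record how $\mathcal{R}$ and $\Gamma$ transform under $\delta\mapsto (U\otimes U)\delta(U^*\otimes U^*)$: there is an invertible matrix $N$ (built from $U$, $\bar U$) such that $\mathcal{R}\big((U\otimes U)\delta(U^*\otimes U^*)^{\Gamma}\big)=N\,\mathcal{R}(\delta^{\Gamma})\,N^{*}$, which is the congruence statement alluded to in the paragraph preceding the corollary. Since congruence by an invertible matrix preserves positive definiteness, $\mathcal{R}(\delta^{\Gamma})>0$ from Theorem~\ref{theorem1} forces $\mathcal{R}(\widehat{\delta}^{\,\Gamma})>0$, so $\widehat{\delta}$ is again a PPT/SPC state, now in the filter normal form. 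One could alternatively cite the already‑proven fact from \cite{CarielloArxiv} that putting a state in the filter normal form preserves the SPC class, which is exactly what the text asserts.

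Finally I would conclude the reduction: deciding separability of an arbitrary $\gamma\in\M{M}_k\otimes\M{M}_m$ is equivalent to deciding separability of $\widehat{\delta}$, a SPC state in the filter normal form inside $\M{M}_{k+m}\otimes\M{M}_{k+m}$; so an algorithm (or characterization) settling separability for SPC states in the filter normal form settles it in full generality. The only genuine content beyond Theorem~\ref{theorem1} is the behaviour of the realignment‑of‑partial‑transpose under local filtering, and the main obstacle is making that transformation law precise — identifying the congruence matrix $N$ and verifying the identity $\mathcal{R}\big(((U\otimes U)\delta(U^*\otimes U^*))^{\Gamma}\big)=N\mathcal{R}(\delta^{\Gamma})N^{*}$ — rather than anything in the logical skeleton of the reduction, which is otherwise immediate. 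Since these transformation properties are already available in \cite{CarielloArxiv}, the proof of the corollary is short, and I would present it essentially as the two‑line deduction above with the appropriate citations.
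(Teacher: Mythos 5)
Your proposal follows essentially the same route as the paper: use Theorem~\ref{theorem1} to transfer separability from $\gamma$ to $T(\gamma)$, apply \cite[Corollary 4.6]{CarielloArxiv} to obtain a symmetric local filter $U\otimes U$, and then use the congruence law for $\mathcal{R}(\,\cdot\,^{\Gamma})$ under $\delta\mapsto (U\otimes U)\delta(U^*\otimes U^*)$ (the paper's citation of item 3 of \cite[Lemma 2.3]{CarielloArxiv}, with $N=U\otimes\overline{U}$) to conclude that the filtered state is still SPC. Your guessed transformation law is exactly the one the paper uses, so the logical skeleton is sound.

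One imprecision is worth correcting: you justify the existence of the \emph{symmetric} filter $U\otimes U$ from $\delta^{\Gamma}>0$ alone. That hypothesis only yields the general filter normal form with two a priori different local operations $R\otimes S$, and with $R\neq S$ the realignment of the partial transpose does not transform by a $*$-congruence, so the SPC property would not obviously survive. The paper is careful to separate the two steps: $T(\gamma)^{\Gamma}>0$ gives the $R\otimes S$ form, while the stronger fact $\mathcal{R}(T(\gamma)^{\Gamma})>0$ (also from item (2) of Theorem~\ref{theorem1}) is what \cite[Corollary 4.6]{CarielloArxiv} requires to produce a single $O$ acting as $O\otimes O$; that symmetry is precisely what makes your congruence argument applicable. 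Since you already have $\mathcal{R}(T(\gamma)^{\Gamma})>0$ in hand and use it later, the fix is only a matter of attributing the symmetric filtering to the correct hypothesis.
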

\begin{proof}
By the previous theorem, it is obvious that $\gamma$ is separable $(SN(\gamma)=1)$, if and only, if $T(\gamma)$ is separable $(SN(T(\gamma))=1)$. 

\vspace{0,2cm}

We also noticed there that $T(\gamma)^{\Gamma}$ is positive definite. So there are invertible matrices
$R,S\in\M{M}_{k+m}$ such that $\phi=(R\otimes S)T(\gamma)^{\Gamma}(R\otimes S)^*$ is in the filter normal form. Hence $(R\otimes \overline{S})T(\gamma)(R^*\otimes S^t)$ is in the filter normal form as well.

\vspace{0,2cm}

Actually, since  $\mathcal{R}(T(\gamma)^{\Gamma})$ is positive definite, we can do better. By  \cite[Corollary 4.6]{CarielloArxiv}, we can find an invertible matrix $O\in\M{M}_{k+m}$ such that $\delta=(O\otimes O)T(\gamma)(O^*\otimes O^*)$ is in the filter normal form. Notice that 
 $$\M{R}(\delta^{\Gamma})=(O\otimes \overline{O})\mathcal{R}(T(\gamma)^{\Gamma})(O^*\otimes O^t),$$
 by item 3 of \cite[Lemma 2.3]{CarielloArxiv}.  Therefore, $\M{R}(\delta^{\Gamma})$ is also positive definite. Hence $\delta$ is a SPC state in the filter normal form.

\vspace{0,2cm}

So we can solve the separability in $\M{M}_k\otimes\M{M}_m$ by solving the separability problem for SPC states in the filter normal form in $\M{M}_{k+m}\otimes\M{M}_{k+m}$.
\end{proof}

\vspace{0,5cm}

The reader may wonder if it is really necessary to add $tr(Q\gamma Q^*)P_{sym}^{k+m}$ to $Q\gamma Q^*$ in order to obtain a state that can be put in the filter normal form. The answer to this question is given in the next proposition, but it requires one simple definition: If $\delta=\sum_{i=1}^n A_i\otimes B_i \in \M{M}_k\otimes\M{M}_{k}$, define $G_{\delta}:\M{M}_k\rightarrow\M{M}_k$ as $G_{\delta}(X)= \sum_{i=1}^n tr(A_iX) B_i.$

\vspace{0,5cm}

\begin{proposition} If $\gamma\in \M{M}_k \otimes\M{M}_{m}$ is a state such that $k\neq m$ then $Q\gamma Q^*\in \M{M}_{k+m}\otimes \M{M}_{k+m}$ cannot be put in the filter normal form.
\end{proposition}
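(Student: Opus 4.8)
The plan is to show that the filter normal form, which would require $(Q\gamma Q^*)_A$ and $(Q\gamma Q^*)_B$ to be brought simultaneously to the identity by a congruence $X\otimes Y$, is obstructed by a dimension/support mismatch: when $k\neq m$, the reduced operators $(Q\gamma Q^*)_A$ and $(Q\gamma Q^*)_B$ are supported on complementary subspaces of $\mathbb{C}^{k+m}$ of different dimensions, so neither can be positive definite, and a singular reduced operator cannot be rescaled to the identity.

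First I would compute the two reduced operators of $Q\gamma Q^*$ explicitly. Writing $\gamma=\sum_i A_i\otimes B_i$ with $A_i\in\M{M}_k$, $B_i\in\M{M}_m$, recall $Q=P_{anti}^{k+m,2}C$ and $C=\binom{Id_k}{0}\otimes\binom{0}{Id_m}$, so that $C(a\otimes b)=(a\times\vec 0_m)\otimes(\vec 0_k\times b)$. Using $P_{anti}^{k+m,2}=\tfrac12(Id-F)$ and the fact (established in the proof of Lemma \ref{lemma1}) that $C^*FC=0$, one finds that $Q\gamma Q^*=\tfrac14(Id-F)\,C\gamma C^*\,(Id-F)$. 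The key structural observation is that $\Ima(C)\subset(\mathbb{C}^k\times\vec 0_m)\otimes(\vec 0_k\times\mathbb{C}^m)$, and after antisymmetrizing, $\Ima(Q\gamma Q^*)$ sits inside the span of vectors $(a\times\vec 0_m)\wedge(\vec 0_k\times b)$. I would then expand $Q\gamma Q^* = \sum_i \widetilde A_i\otimes \widetilde B_i + (\text{flip terms})$ and take partial traces. The upshot I expect: $(Q\gamma Q^*)_A$ is a positive operator whose range lies in $(\mathbb{C}^k\times\vec 0_m)\oplus$ (something of dimension $\le m$ coming from the $\vec 0_k\times\mathbb{C}^m$ block via the flipped terms), and symmetrically for $(Q\gamma Q^*)_B$; a careful bookkeeping shows in fact that $(Q\gamma Q^*)_A$ has rank at most $k+ \min\{k,m\}$ — more to the point, it will have a nontrivial kernel on one of the two blocks whenever $k\neq m$, because the ``symmetric'' part $\tfrac14(C\gamma C^*+FC\gamma C^*F)$ contributes a $k$-dimensional block and an $m$-dimensional block that cannot both be made full, while the ``cross'' part $-\tfrac14(FC\gamma C^* + C\gamma C^*F)$ is traceless against any factorized test and so contributes nothing to the reduced operator's trace balance between the blocks.

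Next, I would invoke the obstruction: if $\delta$ can be put in the filter normal form, there exist invertible $R,S$ with $(R\otimes S)\delta(R^*\otimes S^*)_A = (R\otimes S)\delta(R^*\otimes S^*)_B = Id_{k+m}$; but $\bigl((R\otimes S)\delta(R^*\otimes S^*)\bigr)_A = R\,\delta_A\, R^*$ and likewise $\bigl((R\otimes S)\delta(R^*\otimes S^*)\bigr)_B = S\,\delta_B\, S^*$ (this is the standard behaviour of reduced operators under local congruence, essentially item 3 of \cite[Lemma 2.3]{CarielloArxiv} applied to partial traces). Since a congruence by an invertible matrix preserves rank, $\delta_A$ being equal, up to congruence, to $Id_{k+m}$ forces $\delta_A>0$, and similarly $\delta_B>0$. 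So it suffices to show that $(Q\gamma Q^*)_A$ or $(Q\gamma Q^*)_B$ is singular when $k\neq m$. This is where the function $G_\delta$ from the preceding definition is the convenient tool: I would show $\rank\bigl((Q\gamma Q^*)_A\bigr)$ and $\rank\bigl((Q\gamma Q^*)_B\bigr)$ are controlled by the ranks of the ``symmetric block'' pieces — one of size $\le k$, the other $\le m$ — and that at least one of the two reduced operators annihilates the $(\vec 0_k\times\mathbb{C}^m)$-block entirely (respectively the $(\mathbb{C}^k\times\vec 0_m)$-block), so its rank is at most $\min\{k,m\} + \max\{k,m\} $ is automatic but, more sharply, that it kills a subspace of dimension $|k-m|>0$. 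Hence it is not positive definite, contradicting the filter normal form.

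The main obstacle is the explicit partial-trace computation for $Q\gamma Q^*$ and extracting from it the precise statement that one reduced operator has a kernel of dimension exactly $|k-m|$ (or at least positive dimension) independently of the choice of state $\gamma$; the antisymmetrization makes the cross terms $FC\gamma C^*$ and $C\gamma C^*F$ bookkeeping-heavy, and one must be careful that these cross terms, although they move weight between the two blocks, cannot conspire to fill in the missing kernel for \emph{every} $\gamma$. I expect the cleanest route is to fix the block decomposition $\mathbb{C}^{k+m}=\mathbb{C}^k\oplus\mathbb{C}^m$, write every operator in $2\times2$ block form, and observe that $C\gamma C^*$ is supported on the single off-diagonal-type block $(\mathbb{C}^k)\otimes(\mathbb{C}^m)$, so that after the two flips all four contributions to $Q\gamma Q^*$ land in a $2k\times 2m$-type support pattern whose two partial traces have the claimed unequal, non-full ranks; then positivity of $P_{sym}^{k+m,2}$'s perturbation is exactly what is lost when the $P_{sym}$ term is dropped. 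Once the rank deficiency is in hand, the contradiction with the filter normal form is immediate from the congruence-invariance of rank.
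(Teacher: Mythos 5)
There is a genuine gap: the obstruction you aim for does not exist. Your plan reduces the proposition to showing that $(Q\gamma Q^*)_A$ or $(Q\gamma Q^*)_B$ is singular when $k\neq m$, but a direct computation shows both marginals are generically positive definite. Writing $\gamma=\sum_i A_i\otimes B_i$ and $\widetilde A_i=\left(\begin{smallmatrix}A_i&0\\0&0\end{smallmatrix}\right)$, $\widetilde B_i=\left(\begin{smallmatrix}0&0\\0&B_i\end{smallmatrix}\right)$, one has $Q\gamma Q^*=\tfrac14\sum_i\bigl(\widetilde A_i\otimes\widetilde B_i-F(\widetilde A_i\otimes\widetilde B_i)-(\widetilde A_i\otimes\widetilde B_i)F+\widetilde B_i\otimes\widetilde A_i\bigr)$. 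The partial traces of the two cross terms are $\widetilde B_i\widetilde A_i$ and $\widetilde A_i\widetilde B_i$, which vanish identically because $\widetilde A_i$ and $\widetilde B_i$ have complementary supports; so the cross terms do not merely ``fail to conspire,'' they contribute nothing, and $(Q\gamma Q^*)_A=(Q\gamma Q^*)_B=\tfrac14\left(\begin{smallmatrix}\gamma_A&0\\0&\gamma_B\end{smallmatrix}\right)$. For $\gamma$ the maximally mixed state this is $\tfrac14\bigl(\tfrac1k Id_k\oplus\tfrac1m Id_m\bigr)>0$, so there is no kernel of dimension $|k-m|$ (or of any positive dimension), and the necessary condition ``both marginals nonsingular'' is satisfied. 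Your bookkeeping expectation in the middle paragraph is therefore false, and the contradiction you want never materializes.

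The paper's proof uses a strictly stronger necessary condition for the filter normal form, namely Gurvits' rank non-decreasing property of the map $G_{Q\gamma Q^*}(X)=\sum_i tr(A_iX)B_i$: this must not decrease rank on \emph{every} positive semidefinite input $X$, whereas positive definiteness of the marginals only tests $X=Id$ (since $G_\delta(Id)=\delta_B$). Testing instead the rank-$k$ projection $X_0=\left(\begin{smallmatrix}Id_k&0\\0&0\end{smallmatrix}\right)$, the same support considerations give $G_{Q\gamma Q^*}(X_0)=\tfrac14\left(\begin{smallmatrix}0&0\\0&\gamma_B\end{smallmatrix}\right)$, of rank at most $m<k=\rank(X_0)$ (the paper organizes this by writing $\delta=P\gamma P^*+Q\gamma Q^*$ with $\delta=\tfrac12(C\gamma C^*+FC\gamma C^*F)$ and using positivity of $G_{\delta-Q\gamma Q^*}$ to bound the rank). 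This is the point of the $G_\delta$ machinery introduced just before the proposition: it is not a tool for computing marginals but the Gurvits/Sinkhorn obstruction itself. To repair your argument you would have to replace the marginal-singularity claim by this rank non-decreasing test at a non-identity input, which is essentially the paper's proof.
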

\begin{proof} Let $C$ and $Q$ be as in definition \ref{definition1} and define $P=P_{sym}^{k+m}C$.
Assume without loss of generality that $k>m$. 

\vspace{0,5cm}

Notice that $$\delta=\frac{1}{2}(C\gamma C^*+F(C\gamma C^*)F)=P\gamma P^*+Q\gamma Q^*,$$

where $F\in \M{M}_{k+m}\otimes  \M{M}_{k+m}$ is the flip operator.
\vspace{0,5cm}

Now, $G_{\delta}:\M{M}_{k+m}\rightarrow \M{M}_{k+m}$, as defined just before this proposition, satisfies  $$G_{\delta}\left(\begin{pmatrix}
Id_{k\times k} & 0\\
0 & 0
\end{pmatrix}\right)=\frac{1}{2}G_{C\gamma C^*}\left(\begin{pmatrix}
Id_{k\times k} & 0\\
0 & 0
\end{pmatrix}\right)+\frac{1}{2}\stackrel{=\ 0}{\overbrace{G_{FC\gamma C^*F}\left(\begin{pmatrix}
Id_{k\times k} & 0\\
0 & 0
\end{pmatrix}\right)}}=\frac{1}{2}\begin{pmatrix}
0 & 0\\
0 & tr_A(\gamma)
\end{pmatrix}.$$

\vspace{0,5cm}

Since $\delta -Q\gamma Q^*$ is positive semidefinite, the map $G_{\delta} -G_{Q\gamma Q^*}=G_{\delta -Q\gamma Q^*}$ is positive. Hence $$\operatorname{rank}\left(G_{Q\gamma Q^*}\left(\begin{pmatrix}
Id_{k\times k} & 0\\
0 & 0
\end{pmatrix}\right)\right)\leq \operatorname{rank}\left(G_{\delta}\left(\begin{pmatrix}
Id_{k\times k} & 0\\
0 & 0
\end{pmatrix}\right)\right)=\operatorname{rank}(tr_A(\gamma))\leq m<k.$$

\vspace{0,5cm}

Notice that the image of a rank $k$ positive semidefinite Hermitian matrix by  $G_{Q\gamma Q^*}$ has rank smaller than $k$. So $G_{Q\gamma Q^*}$ is not a rank non-decreasing map. The  rank non-decreasing property for $G_{Q\gamma Q^*}$ is necessary  for the state $Q\gamma Q^*$ to be put in the filter normal form \cite{CarielloLAMA, gurvits2004}. Hence $Q\gamma Q^*$ cannot be put in the filter normal form
\end{proof}

\vspace{0,5cm}

\section{The behaviour of the Schmidt number around the projection on the symmetric subspace}

\vspace{0,5cm}

Our theorem \ref{theorem1} provides a straightforward way to produce states $T(\gamma)$ around $P^{n,2}_{sym}$ with Schmidt number varying from $1$ to $\lfloor\frac{n}{2}\rfloor$  by choosing the proper $\gamma\in\mathcal{M}_{\lfloor\frac{n}{2}\rfloor}\otimes\mathcal{M}_{\lceil\frac{n}{2}\rceil}$ $(n=k+m$, $k=\lfloor\frac{n}{2}\rfloor$  and  $m=\lceil\frac{n}{2}\rceil)$.

\vspace{0,5cm}

The reader may wonder if $\lfloor\frac{n}{2}\rfloor$ is an optimal upper bound for the Schmidt number of states within some small ball around $P^{n,2}_{sym}$. In this section we prove that this is indeed the case for states that commute with the flip operator $F\in \mathcal{M}_n\otimes\mathcal{M}_n$, i.e., for states such that an orthonormal basis  of $\mathbb{C}^n\otimes \mathbb{C}^n$ formed by symmetric and antisymmetric eigenvectors exist. For states close to  $P^{n,2}_{sym}$, but not commuting with $F$, we cannot say anything.

\vspace{0,5cm}

First, we must extend a couple of results proved in \cite[Lemma 2]{CarielloLMP2} and \cite[Lemma 3]{CarielloLMP2}.
 
 \begin{lemma}\label{lemmacombinedlemmas}
 Let $a_1,a_2$ be orthonormal vectors of $\mathbb{C}^n$ and $s=a_1\otimes a_2+a_2\otimes a_1$. Then $P_{sym}^{n,2}+\pm\epsilon a_1a_1^*\otimes a_1a_1^*$ is separable for $\epsilon\in \left[0,\frac{1}{2}\right]$ and $P_{sym}^{n,2}\pm\epsilon ss^*$ is separable for $\epsilon\in \left[0,\frac{1}{12}\right]$. 
\end{lemma}
\begin{proof}
The state $P_{sym}^{n,2}-\epsilon a_1a_1^*\otimes a_1a_1^*$ was proved to be separable when $\epsilon\leq \frac{1}{2}$ in \cite[Lemma 2]{CarielloLMP2} and $P_{sym}^{n,2}-\epsilon ss^*$ was proved to be separable when  $\epsilon\in \left[0,\frac{1}{12}\right]$ in \cite[Lemma 3]{CarielloLMP2}.\vspace{0,5cm}

So it remains to show that $P_{sym}^{n,2}+\epsilon ss^*$ is separable  when $\epsilon\in \left[0,\frac{1}{12}\right]$.

Notice that $$w=a_1\otimes a_1+ a_2\otimes a_2=\frac{a_1+ia_2}{\sqrt{2}}\otimes \frac{a_1-ia_2}{\sqrt{2}}+\frac{a_1-ia_2}{\sqrt{2}}\otimes \frac{a_1+ia_2}{\sqrt{2}}$$

and $\frac{a_1+ia_2}{\sqrt{2}},\frac{a_1-ia_2}{\sqrt{2}}$ are orthonormal. So by the case already proved $P_{sym}^{n,2}-\epsilon ww^*$ is separable when $\epsilon\in \left[0,\frac{1}{12}\right]$. 

\vspace{0,3cm}

Finally notice that
\begin{equation}\label{eq3}
P_{sym}^{k,2}+\epsilon ss^*=P_{sym}^{k,2}-\epsilon ww^*+\epsilon(ww^*+ss^*).
\end{equation}
and $ww^*+ss^*=\frac{1}{2}(w+s)(w+s)^*+\frac{1}{2}(w-s)(w-s)^*$ and $w\pm s$ both have Schmidt rank equal to 1. So $ww^*+ss^*$ is separable and $P_{sym}^{k,2}+\epsilon ss^*$ is a sum of two separable states when $\epsilon\in \left[0,\frac{1}{12}\right]$ by equation \eqref{eq3}.
\end{proof}

\vspace{0,5cm}
\begin{corollary}\label{corollary2} Let $m_1,\ldots,m_{t}$, $s_1,\ldots,s_{l}$ be non-null vectors of $\mathbb{C}^ n\otimes \mathbb{C}^n$  satisfying
\begin{enumerate}
\item $m_i=v_i\otimes w_i+w_i\otimes v_i$, where $v_i$ and $w_i$ are orthogonal vectors of $\mathbb{C}^n$ 
\item $s_i=c_i\otimes c_i$, where $c_i\in \mathbb{C}^n$.
\end{enumerate}
Then $tr(\sum_{i=1}^{t}m_im_i^*)P_{sym}^{n,2}\pm \epsilon \sum_{i=1}^{t}m_im_i^*$  is separable when $\epsilon\in \left[0,\frac{1}{6}\right]$ and $tr(\sum_{i=1}^{l}s_is_i^*)P_{sym}^{n,2}\pm \epsilon\sum_{i=1}^{l}s_is_i^*$  is separable when $\epsilon\in \left[0,\frac{1}{2}\right]$.
\end{corollary}
\begin{proof}
Notice that $\frac{m_i}{\|v_i\|\|w_i\|}=\frac{v_i}{\|v_i\|}\otimes \frac{w_i}{\|w_i\|}+\frac{w_i}{\|w_i\|}\otimes \frac{v_i}{\|v_i\|}$, where $\frac{v_i}{\|v_i\|}$ and $\frac{w_i}{\|w_i\|}$ are orthonormal.\\

Therefore, by lemma \ref{lemmacombinedlemmas},$$P_{sym}^{n,2}\pm \epsilon \frac{m_i}{\|v_i\|\|w_i\|}\frac{m_i^*}{\|v_i\|\|w_i\|}=P_{sym}^{n,2}\pm \epsilon 2\frac{m_im_i^*}{tr(m_im_i^*)}$$

 is separable when $\epsilon\in \left[0,\frac{1}{12}\right]$.
Thus, $$tr(m_im_i^*)P_{sym}^{n,2}\pm \epsilon m_im_i^*$$

 is separable when $\epsilon\in \left[0,\frac{1}{6}\right]$.\\

The other case is similar.  Notice that $\frac{s_i}{\|c_i\|^ 2}= \frac{c_i}{\|c_i\|}\otimes \frac{c_i}{\|c_i\|}$. Therefore, by lemma \ref{lemmacombinedlemmas}, 
$$P_{sym}^{n,2}\pm \epsilon \frac{s_i}{\|c_i\|^ 2}\frac{s_i^*}{\|c_i\|^ 2}=P_{sym}^{n,2}\pm \epsilon \frac{s_is_i^*}{tr(s_is_i^*)}$$

 is separable when $\epsilon\in \left[0,\frac{1}{2}\right]$.Thus, $$tr(s_is_i^*)P_{sym}^{n,2}\pm \epsilon s_is_i^*$$

 is separable when $\epsilon\in \left[0,\frac{1}{2}\right]$.\\
\end{proof}

\vspace{0,5cm}

\begin{lemma}\label{lemmadetail}Let $v=\sum_{i=1}^t\lambda_i a_i\otimes a_i$ where $\lambda_i>0$, for every $i$, and $a_1,\ldots,a_t$ are orthonormal vectors of $\mathbb{C}^n$. There are $m_1,\ldots,m_{2^{t-1}-1}$, $s_1,\ldots,s_{2^{t-1}}$ non-null vectors of $\mathbb{C}^ n\otimes \mathbb{C}^n$  satisfying
\begin{enumerate}
\item $m_i=v_i\otimes w_i+w_i\otimes v_i$, where $v_i$ and $w_i$ are orthogonal vectors of $\mathbb{C}^n$ 
\item $s_i=c_i\otimes c_i$, where $c_i\in \mathbb{C}^n$
\item $A=vv^*+\sum_{i=1}^{2^{t-1}-1}m_im_i^*=\sum_{i=1}^{2^{t-1}}s_is_i^*$
\item $tr(A)=(\lambda_1+\ldots+\lambda_t)^2$
\end{enumerate}
 \end{lemma}
 \begin{proof}
 It is an induction on $t$. 
 
 If $t=2$ then $v=\lambda_1 a_1\otimes a_1+\lambda_2 a_2\otimes a_2$. Set $m_1=\sqrt{\lambda_1\lambda_2} (a_1\otimes a_2+a_2\otimes a_1)$.
 
 Notice that $$vv^*+m_1m_1^*=\left(\frac{v+m_1}{\sqrt{2}}\right)\left(\frac{v+m_1}{\sqrt{2}}\right)^*+\left(\frac{v-m_1}{\sqrt{2}}\right)\left(\frac{v-m_1}{\sqrt{2}}\right)^*.$$

Since $$s_1= \frac{v+m_1}{\sqrt{2}}=\frac{\lambda_1+\lambda_2}{\sqrt{2}}\left(\frac{\sqrt{\lambda_1}a_1+\sqrt{\lambda_2}a_2}{\sqrt{\lambda_1+\lambda_2}}\right)\otimes\left(\frac{\sqrt{\lambda_1}a_1+\sqrt{\lambda_2}a_2}{\sqrt{\lambda_1+\lambda_2}}\right)\text{ and}$$
$$s_2= \frac{v-m_1}{\sqrt{2}}=\frac{\lambda_1+\lambda_2}{\sqrt{2}}\left(\frac{\sqrt{\lambda_1}a_1-\sqrt{\lambda_2}a_2}{\sqrt{\lambda_1+\lambda_2}}\right)\otimes\left(\frac{\sqrt{\lambda_1}a_1-\sqrt{\lambda_2}a_2}{\sqrt{\lambda_1+\lambda_2}}\right), $$

$tr(s_is_i^*)=\frac{(\lambda_1+\lambda_2)^2}{2}$ for $i=1,2$. Therefore $$tr(vv^*+m_1m_1^*)=tr(s_1s_1^*+s_2s_2^*)=(\lambda_1+\lambda_2)^2.$$

The proof of the case $t=2$ is complete. Suppose the result holds for $t=l$. \\

Let $t=l+1$. So $v=\sum_{i=1}^{l+1}\lambda_i a_i\otimes a_i$. Set $w=\sqrt{\lambda_1\lambda_2} (a_1\otimes a_2+a_2\otimes a_1)$. Again notice that 
$$vv^*+ww^*=\left(\frac{v+w}{\sqrt{2}}\right)\left(\frac{v+w}{\sqrt{2}}\right)^*+\left(\frac{v-w}{\sqrt{2}}\right)\left(\frac{v-w}{\sqrt{2}}\right)^*,$$

but now $$\frac{v\pm w}{\sqrt{2}}=\frac{\lambda_1+\lambda_2}{\sqrt{2}}\left(\frac{\sqrt{\lambda_1}a_1\pm\sqrt{\lambda_2}a_2}{\sqrt{\lambda_1+\lambda_2}}\right)\otimes\left(\frac{\sqrt{\lambda_1}a_1\pm\sqrt{\lambda_2}a_2}{\sqrt{\lambda_1+\lambda_2}}\right)+\sum_{i=3}^{l+1}\frac{\lambda_i}{\sqrt{2}}a_i\otimes a_i.$$

By induction hypothesis, there are $m_1^{\pm},\ldots,m_{2^{l-1}-1}^{\pm}$, $s_1^{\pm},\ldots,s_{2^{l-1}}^{\pm}$ non-null vectors of $\mathbb{C}^ n\otimes \mathbb{C}^n$  satisfying the first two hypothesis and  $$A_{\pm}=\left(\frac{v\pm w}{\sqrt{2}}\right)\left(\frac{v\pm w}{\sqrt{2}}\right)^*+\sum_{i=1}^{2^{l-1}-1}(m_i^{\pm})(m_i^{\pm})^*=\sum_{i=1}^{2^{l-1}}(s_i^{\pm})(s_i^{\pm})^*.$$
In addition $tr(A_{\pm})=\left(\frac{\lambda_1+\lambda_2}{\sqrt{2}}+\sum_{i=3}^ {l+1}\frac{\lambda_i}{\sqrt{2}}\right)^2=\frac{(\lambda_1+\ldots+\lambda_{l+1})^ 2}{2}$.\\

So $tr(A_++A_{-})=(\lambda_1\ldots+\lambda_{l+1})^2$ and $A_++A_{-}=$
$$vv^*+\stackrel{2^l-1 \text{ terms }}{\overbrace{ww^*+\sum_{i=1}^{2^{l-1}-1}(m_i^{+})(m_i^{+})^*+\sum_{i=1}^{2^{l-1}-1}(m_i^{-})(m_i^{-})^*}}=\stackrel{2^l \text{ terms }}{\overbrace{\sum_{i=1}^{2^{l-1}}(s_i^{+})(s_i^{+})^*+\sum_{i=1}^{2^{l-1}}(s_i^{-})(s_i^{-})^*}}.$$
 \end{proof}

\begin{lemma}\label{lemmakeyseparable}Let $v$ be a unit symmetric vector of $\mathbb{C}^n\otimes\mathbb{C}^n$ with Schmidt rank equal to $t$. If $\epsilon\in\left[0,\frac{1}{6(2t-1)}\right]$ then $P_{sym}^{n,2}\pm\epsilon vv^*$ is separable. \end{lemma}
\begin{proof} Let $\epsilon=\epsilon_1\epsilon_2$, where $\epsilon_1=\frac{1}{6}$ and $\epsilon_{2}\leq \frac{1}{2t-1}$. Then $$P_{sym}^{n,2}\pm\epsilon vv^*=P_{sym}^{n,2}\pm\epsilon_1 v_1v_1^*,$$ 

where $v_1=\sqrt{\epsilon_2}v$.\\

By item $c)$ of \cite[Corollary 4.4.4]{Horn}, there are orthonormal vectors $a_1,\ldots,a_t$ of $\mathbb{C}^n$ and positive numbers $\lambda_1,\ldots,\lambda_t$ such that $v_1=\sum_{i=1}^t\lambda_ia_i\otimes a_i$. Notice that $\lambda_1^ 2+\ldots+\lambda_t^2=\|v_1\|^2=\epsilon_2\|v\|^2=\epsilon_2$.  

Now, by lemma \ref{lemmadetail}, for this $v_1$, there are  $m_1,\ldots,m_{2^{t-1}-1}$, $s_1,\ldots,s_{2^{t-1}}$ non-null vectors of $\mathbb{C}^ n\otimes \mathbb{C}^n$  satisfying the hypothesis of that lemma. Hence $A=\sum_{i=1}^{2^{t-1}-1}m_im_i^*+v_1v_1^*=\sum_{i=1}^{2^{t-1}}s_is_i^*$ and\\

$$P_{sym}^{n,2}\pm\epsilon_1 v_1v_1^*=P_{sym}^{n,2}\mp\epsilon_1\left(\sum_{i=1}^{2^{t-1}-1}m_im_i^*\right) \pm\epsilon_1 \left(\sum_{i=1}^{2^{t-1}-1}m_im_i^*+v_1v_1^*\right)$$
\begin{equation}\label{eq4}
\hspace{0,8cm} =P_{sym}^{n,2}\mp\epsilon_1\left(\sum_{i=1}^{2^{t-1}-1}m_im_i^*\right) \pm\epsilon_1 \left(\sum_{i=1}^{2^{t-1}}s_is_i^*\right).
\end{equation}

\vspace{0,5cm}

Rewriting equation \eqref{eq4} we get $P_{sym}^{n,2}\pm\epsilon_1 v_1v_1^*=$
\begin{equation}\label{eq5}
=\left(1-tr\left(\sum_{i=1}^{2^{t-1}-1}m_im_i^*\right)-tr\left(\sum_{i=1}^{2^{t-1}}s_is_i^*\right)\right)P_{sym}^{n,2}+
\end{equation}
\begin{equation}\label{eq6}
tr\left(\sum_{i=1}^{2^{t-1}-1}m_im_i^*\right)P_{sym}^{n,2}\mp\epsilon_1\left(\sum_{i=1}^{2^{t-1}-1}m_im_i^*\right)+
\end{equation}
\begin{equation}\label{eq7}
tr\left(\sum_{i=1}^{2^{t-1}}s_is_i^*\right)P_{sym}^{n,2} \pm\epsilon_1 \left(\sum_{i=1}^{2^{t-1}}s_is_i^*\right).
\end{equation}

\vspace{0,5cm}

Notice the matrices in \eqref{eq6} and \eqref{eq7} are separable, since $\epsilon_1=\frac{1}{6}$ and by lemma \ref{corollary2}.

\vspace{0,5cm}

Next, by the hypothesis of lemma \ref{lemmadetail},  $tr(A)=tr\left(\sum_{i=1}^{2^{t-1}}s_is_i^*\right)=(\lambda_1+\ldots+\lambda_t)^2$, so
$$tr\left(\sum_{i=1}^{2^{t-1}-1}m_im_i^*\right)=(\lambda_1+\ldots+\lambda_t)^2-tr(v_1v_1^*)=(\lambda_1+\ldots+\lambda_t)^2-\lambda_1^2-\ldots-\lambda_t^2.$$

\vspace{0,5cm}

Hence 
$$\left(1-tr\left(\sum_{i=1}^{2^{t-1}-1}m_im_i^*\right)-tr\left(\sum_{i=1}^{2^{t-1}}s_is_i^*\right)\right)=1+\lambda_1^2+\ldots+\lambda_t^2 -2(\lambda_1+\ldots+\lambda_t)^2.$$

\vspace{0,5cm}

Since $(\lambda_1+\ldots+\lambda_t)^2\leq (\lambda_1^2+\ldots+\lambda_t^2)t$ and $\lambda_1^2+\ldots+\lambda_t^2=\epsilon_2\leq\frac{1}{2t-1}$, 

$$\left(1-tr\left(\sum_{i=1}^{2^{t-1}-1}m_im_i^*\right)-tr\left(\sum_{i=1}^{2^{t-1}}s_is_i^*\right)\right)\geq 1-(2t-1)(\lambda_1^2+\ldots+\lambda_t^2)\geq 0.$$

 So the matrix in \eqref{eq5} is separable, which completes the proof.
\end{proof}

\vspace{0,5cm}

In order to describe the next results, we must use the trace norm of a matrix $P$, denoted here by $\|P\|_1$. This norm is defined as the sum of its singular values.

\vspace{0,5cm}

\begin{corollary}\label{corollary3}
Let $P\in \mathcal{M}_n\otimes\mathcal{M}_n$ be Hermitian and $\Ima(P)$ be a subspace of the symmetric subspace of $\mathbb{C}^n\otimes\mathbb{C}^n$. If $\epsilon\in\left[0,\frac{1}{6(2n-1)}\right]$ and $\|P\|_1=1$ then $P_{sym}^{n,2}\pm \epsilon P$ is separable.
\end{corollary}
\begin{proof}
Let $P=P_1-P_2$, where each $P_i$ is positive semidefinite, its image is a subspace of the symmetric subspace of $\mathbb{C}^n\otimes\mathbb{C}^n$ and $\|P\|_1=\|P_1\|_1+\|P_2\|_1$.

\vspace{0,5cm}

Consider the spectral decompositions of $P_1$ and $P_2$:
\begin{center}
$P_1=\sum_{i=1}^s\lambda_iv_iv_i^*$ and $P_2=\sum_{j=1}^l m_jr_jr_j^*$,
\end{center}
where $\lambda_i>0$ and $m_j>0$, for every $i,j$, and $v_1,\ldots,v_s$ are orthonormal symmetric vectors and the same is valid for  $r_1,\ldots,r_l$. Thus, $\|P\|_1=\sum_{i=1}^s\lambda_i+\sum_{j=1}^l m_j=1$.

\vspace{0,5cm}

Therefore, $P_{sym}^{n,2}\pm \epsilon P=P_{sym}^{n,2}\pm \epsilon (P_1-P_2)=$

\begin{equation}
\label{eq8}
=\sum_{i=1}^s \lambda_i(P_{sym}^{n,2}\pm \epsilon v_iv_i^*)+\sum_{j=1}^l m_j(P_{sym}^{n,2}\mp \epsilon r_jr_j^*).
\end{equation}

Finally, since the Schmidt rank of each $v_i$ and $r_j$ is less or equal to $n$ and $\epsilon\in\left[0,\frac{1}{6(2n-1)}\right]$, each matrix inside parenthesis of the equation \eqref{eq8} is separable by Lemma \ref{lemmakeyseparable}.
\end{proof}

\vspace{0,5cm}

We can finally prove the main result of this section. 

\vspace{0,5cm}

\begin{theorem}Let $\gamma\in\mathcal{M}_n\otimes \mathcal{M}_n$ be a state that commutes with the flip operator $F\in\mathcal{M}_n\otimes \mathcal{M}_n$. If $\left\|P_{sym}^{n,2}-\gamma\right\|_1\leq \frac{1}{6(2n-1)}$ then $SN(\gamma)\leq \lfloor\frac{n}{2}\rfloor$.
\end{theorem}
\begin{proof}
We can write $\gamma=P_{sym}^{n,2}+\epsilon P$, where $\|P\|_1=1$ and $\epsilon\in \left[0,\frac{1}{6(2n-1)}\right]$.

\vspace{0,5cm}

Since $\gamma$ is a state and both $\gamma$ and $P_{sym}^{n,2}$ commute with flip operator $F\in\mathcal{M}_n\otimes \mathcal{M}_n$, $P$ is Hermitian and commutes with $F$.
Hence $P=A+B$, where $A$ is Hermitian and its image is a subspace of the symmetric subspace of $\mathbb{C}^n\otimes\mathbb{C}^n$ and $B$ is Hermitian and its image is a subspace of the anti-symmetric subspace of $\mathbb{C}^n\otimes\mathbb{C}^n$. In addition, $1=\|P\|_1=\|A\|_1+\|B\|_1$ and, without loss of generality, we can assume that  $\|A\|_1,\|B\|_1$ are not zero.

\vspace{0,5cm}

As  $\gamma=P_{sym}^{n,2}+\epsilon A+\epsilon B$ is positive semidefinite and the images of $A$ and $P_{sym}^{n,2}$ are inside the symmetric subspace of $\mathbb{C}^n\otimes\mathbb{C}^n$, the matrix $B$ is positive semidefinite.

\vspace{0,5cm}

Now, $$\gamma=\|A\|_1\left(P_{sym}^{n,2}+\epsilon \frac{A}{\|A\|_1}\right)+\|B\|_1\left(P_{sym}^{n,2}+\epsilon \frac{B}{\|B\|_1}\right).$$ 

\vspace{0,5cm}

Finally,  $SN\left(P_{sym}^{n,2}+\epsilon \frac{A}{\|A\|_1}\right)=1$ by corollary \ref{corollary3} and $SN\left(P_{sym}^{n,2}+\epsilon \frac{B}{\|B\|_1}\right)=\frac{SN(B)}{2}\leq\lfloor\frac{n}{2}\rfloor$ by  \cite[Theorem 2]{CarielloLMP2}. So the equation above implies $SN(\gamma)\leq\lfloor\frac{n}{2}\rfloor$.
\end{proof}

\vspace{0,5cm}

\section*{Summary and Conclusion}
In this work we reduced the separability problem to SPC states in the filter normal form. Given a state in $\M{M}_k\otimes\M{M}_m$, we built a SPC state in $\M{M}_{k+m}\otimes\M{M}_{k+m}$ with the same Schmidt number without any knowledge of its value. Therefore, in order to solve the separability problem in $\M{M}_k\otimes\M{M}_m$ for any kind of state, we can do it by solving the problem for SPC states in $\M{M}_{k+m}\otimes\M{M}_{k+m}$ . It is known that such states can be put in the filter normal form preserving its SPC structure. We also showed that the Schmidt number of any state $\gamma\in\mathcal{M}_n\otimes\mathcal{M}_n$ which commutes with the flip operator and lies inside a small ball around $P_{sym}^{n,2}$ cannot exceed $\lfloor\frac{n}{2}\rfloor$.

\vspace{0,5cm}

\section*{Acknowledgment} The author would like to thank the referee for providing constructive comments and helping in the improvement of this manuscript. 

\section*{Disclosure Statement}
\vspace{0,2cm}
No potential conflict of interest was reported by the author.
\vspace{0,5cm}
\section*{Data Availability Statement}
\vspace{0,2cm}
Data sharing not applicable to this article as no datasets were generated or analysed during the current study.
\vspace{0,2cm}

\begin{bibdiv}
\begin{biblist}

\bib{Aubrun}{article}{
  title={Two proofs of St{\o}rmer’s theorem},
  author={Aubrun,Guillaume},
  author={Szarek, Stanish{\l}aw J.},
  journal={arXiv:1512.03293},
  volume={64},
  year={2015}
}

\bib{cariello}{article}{
  title={Separability for weakly irreducible matrices},
  author={Cariello, Daniel},
  journal={Quantum Inf.  Comp.},
  volume={14},
  number={15-16},
  pages={1308--1337},
  year={2014}
}

\bib{CarielloIEEE}{article}{
   author={Cariello, D.},
   title={Completely Reducible Maps in Quantum Information Theory},
   journal={IEEE Transactions on Information Theory},
   volume={62},
   date={2016},
   number={4},
   pages={1721-1732},  
}

\bib{CarielloLAMA}{article}{
    title={Sinkhorn-Knopp theorem for rectangular positive maps},
  author={Cariello, Daniel},
  journal={Linear and Multilinear Algebra},
  volume={67},
  pages={2345-2365},
  year={2019}
}

\bib{CarielloLMP2}{article}{
    title={Schmidt rank constraints in quantum information theory},
  author={Cariello, D.},
  journal={Lett Math Phys},
  volume={111},
  pages={1-17},
  year={2021}
}

\bib{CarielloArxiv}{article}{
 title={A triality pattern in entanglement theory},
  author={Cariello, Daniel},
  journal={	arXiv:2201.11083},
}

\bib{Git}{article}{
   author={Gittsovich, O.},
   author={G\"uhne, O.}
   author={Hyllus, P.}
   author={Eisert, J.}
   title={Unifying several separability conditions using the covariance matrix criterion},
   journal={Phys. Rev. A},
   volume={78},
   year={2008},
   pages={052319},
}

\bib{guhnesurvey}{article}{
   author={G\"uhne, O.}
   author={T\'oth, G.}
   title={Entanglement detection},
   journal={Physics Reports},
   volume={474},
   year={2009},
   number={1},
   pages={1--75},
}

\bib{gurvits2004}{article}{
  title={Classical complexity and quantum entanglement},
  author={Gurvits, Leonid},
  journal={Journal of Computer and System Sciences},
  volume={69},
  number={3},
  pages={448--484},
  year={2004},
  publisher={Elsevier}
}

\bib{Horn}{article}{
    title={Matrix analysis},
   author={Horn, R. A.},
   author={Johnson, C. R. },
  journal={Cambridge university press},
   year={2012},
}

\bib{horodeckifamily}{article}{
  title={Separability of mixed states: necessary and sufficient conditions},
  author={Horodecki, M.},
  author={Horodecki, P.},
  author={Horodecki, R.},
  journal={Phys. Lett. A.},
  volume={223},
  pages={1--8},
  year={1996},
  publisher={Elsevier}
}

\bib{nathaniel}{article}{
  title={Separability from spectrum for qubit-qudit states},
  author = {Johnston, Nathaniel},
  journal = {Phys. Rev. A},
  volume = {88},
  issue = {6},
  pages = {062330},
  numpages = {5},
  year = {2013}
}

\bib{peres}{article}{
    title={Separability criterion for density matrices},
  author={Peres, Asher},
  journal={Physical Review Letters},
  volume={77},
  number={8},
  pages={1413},
  year={1996},
  publisher={APS}
}

\bib{rudolph}{article}{
   author={Rudolph, O.},
   title={Computable Cross-norm Criterion for Separability},
   journal={Lett. Math. Phys.},
   volume={70},
   date={2005},
   pages={57--64}
}

\bib{rudolph2}{article}{
   author={Rudolph, Oliver},
   title={Further results on the cross norm criterion for separability},
   journal={Quantum Inf. Proc.},
   volume={4},
   date={2005},
   pages={219--239}
}

\bib{Sperling2011}{article}{
   author={Sperling, J.},
   author={Vogel, W.},
   title={The Schmidt number as a universal entanglement measure},
   journal={Physica Scripta},
   volume={83},
   number={4},
   year={2011},
   pages={045002},
}

\bib{Terhal}{article}{
   author={Terhal, B. M.},
   author={Horodecki, P.},
   title={Schmidt number for density matrices},
   journal={Phys. Rev. A},
   volume={61},
   number={4},
   year={2000},
   pages={040301},
}

\end{biblist}
\end{bibdiv}

\end{document}